\documentclass[english,review]{elsarticle}
\usepackage[T1]{fontenc}
\usepackage[latin9]{inputenc}
\usepackage{amsthm}
\usepackage{amsmath}
\usepackage{amssymb}
\usepackage{graphicx}

\makeatletter
\numberwithin{equation}{section}
\numberwithin{figure}{section}
\theoremstyle{plain}
\newtheorem{thm}{\protect\theoremname}

\pdfoutput=1
\journal{Journal of Economic Behavior and Organization}

\makeatother

\usepackage{babel}
\providecommand{\theoremname}{Theorem}

\begin{document}
\begin{frontmatter}

\title{Specialists and Generalists: Equilibrium Skill Acquisition Decisions
in Problem-solving Populations}

\author[rvt]{Katharine A. Anderson\corref{cor1}}

\ead{andersok@andrew.cmu.edu}

\cortext[cor1]{Corresponding author: +1-(412) 427-1904}

\address[rvt]{Tepper School of Business, Carnegie Mellon University, Pittsburgh,
PA 15213, USA}
\begin{abstract}
Many organizations rely on the skills of innovative individuals to
create value, including academic and government institutions, think
tanks, and knowledge-based firms. Roughly speaking, workers in these
fields can be divided into two categories: specialists, who have a
deep knowledge of a single area, and generalists, who have knowledge
in a wide variety of areas. In this paper, I examine an individual's
choice to be a specialist or generalist. My model addresses two questions:
first, under what conditions does it make sense for an individual
to acquire skills in multiple areas, and second, are the decisions
made by individuals optimal from an organizational perspective? I
find that when problems are single-dimensional, and disciplinary boundaries
are open, all workers will specialize. However, when there are barriers
to working on problems in other fields, then there is a tradeoff between
the depth of the specialist and the wider scope of problems the generalist
has available. When problems are simple, having a wide variety of
problems makes it is rational to be a generalist. As these problems
become more difficult, though, depth wins out over scope, and workers
again tend to specialize. However, that decision is not necessarily
socially optimal--on a societal level, we would prefer that some workers
remain generalists. \end{abstract}
\begin{keyword}
Skill acquisition \sep specialization \sep jack-of-all-trades \sep
problem solving \sep knowledge based production \sep human capital

\emph{JEL Codes: J24, O31, D00, M53, I23}
\end{keyword}
\maketitle
\emph{Thanks to Scott Page and Ross O'Connell. This work was supported
by the NSF and the Rackham Graduate School, University of Michigan.
Computing resources supplied by the Center for the Study of Complex
Systems, University of Michigan.}

\end{frontmatter}

\section{Introduction}

Many organizations rely on the skills of innovative individuals to
create value. Examples include academic institutions, government organizations,
think tanks, and knowledge-based firms. Workers in these organizations
apply a variety of skills to in order to solve difficult problems:
architects design buildings, biochemists develop new drugs, aeronautical
engineers create bigger and better rockets, software developers create
new applications, and industrial designers create better packaging
materials. Their success--and thus the success of the organizations
they work for--is dependent on the particular set of skills that they
have at their disposal, but in most cases, the decision of which skills
to acquire is made by individuals, rather than organizations. The
perception is that these workers choose to become more specialized
as the problems they face become more complex (Strober (2006)) . This
perception has generated a countervailing tide of money and institutional
attention focused on promoting interdisciplinary efforts. However,
we have very little real understanding of what drives an individual's
decision to specialize. 

Roughly speaking, workers in knowledge-based fields can be divided
into two categories: specialists, who have a deep knowledge of a single
area, and generalists, who have knowledge in a wide variety of areas.%
\footnote{This dichotomy is often summed up in the literature via a metaphor
used by Isaiah Berlin in an essay on Leo Tolstoy: {}``The fox knows
many things, but the hedgehog knows one big thing'' (Berlin (1953))
In other words, foxes are generalists with a wide variety of tools
to apply to problems (albeit sometimes inexpertly) and hedgehogs are
specialists who have a single tool that they can apply very well. %
} In this paper, I consider an individual's decision to be a specialist
or a generalist, looking specifically at two previously unaddressed
questions. First, under what conditions does it make sense for an
individual to acquire skills in multiple areas? And second, are the
decisions made by individuals optimal from an organizational perspective? 

Most of the work done on specialists and generalists is focused on
the roles the two play in the economy. Collins (2001) suggests that
specialists are more likely to found successful companies. Lazear
(2004 and 2005), on the other hand, suggests that the successful entrepreneurs
should be generalists--a theory supported by Astebro and Thompson
(2011), who show that entrepreneurs tend to have a wider range of
experiences than wage workers. Tetlock (1998) finds that generalists
tend to be better forecasters than specialists. In contrast, a wide
variety of medical studies (see, for example, Hillner et al (2000)
and Nallamothu et al (2006)), show that outcomes tend to be better
when patients are seen by specialists, rather than general practitioners.
However, none of this work considers the decision that individuals
make with respect to being a specialist or generalist. While some
people will always become generalists due to personal taste, the question
remains: is it ever rational to do so in the absence of a preference
for interdisciplinarity? And is the decision that the individual makes
optimal from a societal perspective?

There is evidence that being a generalist is costly. Adamic et al
(2010) show that in a wide variety of contexts, including academic
research, patents, and contributions to wikipedia, the contributions
of individuals with greater focus tend to have greater impact, indicating
that there is a tradeoff between the number of fields an individual
can master, and her depth of knowledge in each. This should not be
surprising. Each of us has a limited capacity for learning new things--by
focusing on a narrow field of study, specialists are able to concentrate
their efforts and maximize the use of that limited capacity, while
generalists are forced to spread themselves more thinly in the pursuit
of a wider range of knowledge. In the language of economics, generalists
pay a fixed cost for each new field of study they pursue, in the form
of effort expended learning new jargon, establishing new social contacts
in a field, and becoming familiar with new literatures.

Given that it is costly to diversify ones skills, the decision to
become a generalist can be difficult to rationalize. In this paper,
I examine model in which workers decide whether to be specialists
or generalists to explore conditions under which it is rational for
an individual to choose to be a generalist. I show that when problems
are single-dimensional and there are no barriers to working on problems
in other disciplines, the equilibrium population contains only specialists.
However, when there are barriers to working on problems in other fields
(eg: communication barriers or institutional barriers) then there
is a tradeoff between the depth of study of the specialist and the
wide scope of problems that the generalist can work on. When problems
are relatively simple, generalists dominate because their breadth
of experience gives them a wider variety of problems to work on. But
as problems become more difficult, depth wins out over scope, and
workers tend to specialize. 

I then show that the equilibrium decisions reached by individuals
are not necessarily socially optimal. As problems become harder, individual
workers are more likely to specialize, but as a society, we would
prefer that some individuals remain generalists. This disconnect reflects
the fact that from a social perspective, we would prefer to have researchers
apply the widest possible variety of skills to the problems we face,
but individuals internalize the cost of obtaining those skills. Thus,
the model predicts that some populations will suffer from an undersupply
of generalists. In such populations, it would be socially beneficial
to subsidize the acquisition of skills in broad subject areas. 

Finally, I consider an extension of the model in which problems have
multiple parts. This allows me to consider problems that are explicitly
multidisciplinary--that is, when different parts of a problem are
best addressed using skills from different disciplines. I show when
problems are multidisciplinary, it is possible to rationalize being
a generalist, even when there are no disciplinary boundaries. In particular,
when there is a large advantage to applying the best tool for the
job, being a generalist is optimal.

\section{Model}

I construct a two period model. In period 1, the workers face a distribution
of problems and each worker chooses a set of skills. In period 2,
a problem is drawn from the distribution, and the workers attempt
to solve it using the skills they acquired in period 1. I will solve
for the equilibrium choice of skills in period 1.

Let $S$ be the set of all possible skills.%
\footnote{Skills are defined as bits of knowledge, tools, and techniques useful
for solving problems and not easily acquired in the short run. See
Anderson (2010) for a model with a similar treatment of skills.%
} The skills are arranged into 2 disciplines, $d_{1}$ and $d_{2}$,
each with $K$ skills, $s_{1d}...s_{Kd}$. An example with six skills
arranged into two disciplines is shown in Figure \ref{fig:Two-disciplines}.
\begin{figure}
\includegraphics[clip,width=0.8\columnwidth]{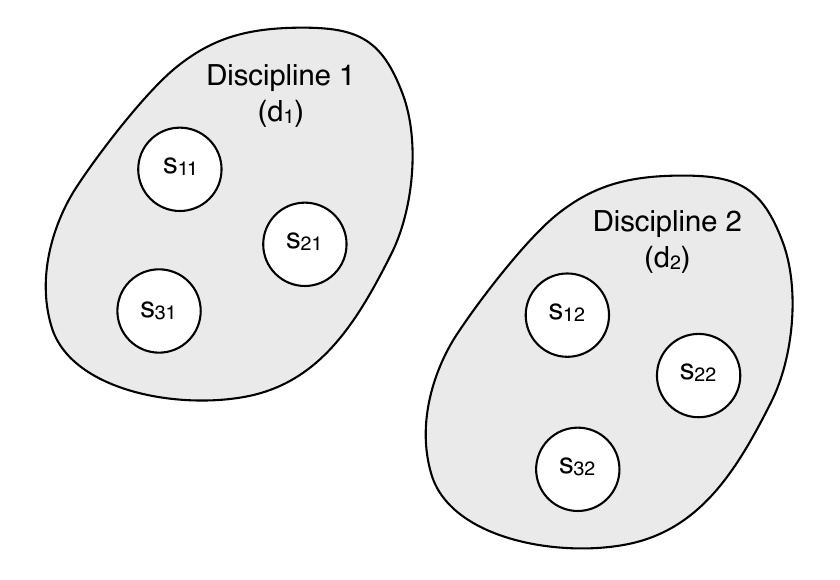}

\caption{\label{fig:Two-disciplines}Two disciplines, each with three skills.}
\end{figure}
A \emph{specialist} is a person who chooses skills within a single
discipline. A \emph{generalist }is a person who chooses some skills
from both disciplines.

A problem, $y$, is a task faced by the workers in the model. A \emph{skill}
is a piece of knowledge that can be applied to the problem in an attempt
to solve it. Each skill $s_{kd}\in S$ has either a high probability
$\left(H\right)$ or a low probability $\left(L\right)$ of solving
the problem.\emph{ }I will define a problem by the matrix of probabilities
that each skill will solve the problem. That is, $y=\left[\begin{array}{ccc}
y_{11} &  & y_{12}\\
\vdots &  & \vdots\\
y_{K1} &  & y_{K2}
\end{array}\right]$ where $y_{kd}=H$ if skill $k$ in discipline $d$ has a high probability
of solving the problem and $L$ if it has a low probability of solving
the problem. So, for example, if there are two disciplines, each with
three skills, a problem might be
\[
y=\left[\begin{array}{cc}
L & H\\
H & H\\
H & L
\end{array}\right]
\]
meaning that two of the skills in each discipline have a high probability
of solving the problem, and one skill in each discipline has a low
probability of solving the problem. Define $h\equiv1-H$ and $l\equiv1-L$.

The mechanics of the model are as follows. In period 1, the workers,
$i_{1}...i_{N}$, each choose a set of skills $A_{i}\subset S$. In
period 2, the workers attempt to solve a problem using those skills.
I will assume that workers have a \emph{capacity }for learning skills,
which limits the number of skills they can obtain. In the current
context, I will assume that all workers all have the same capacity
for learning new skills, and that all skills are equally costly to
obtain.%
\footnote{The case where workers have different capacities would obviously be
an interesting extension, as would the case where different skills
had different costs. %
} Let $M\in\mathbb{Z}^{+}$ represent an individual's capacity for
new skills and let $q=1$ be the cost of acquiring a new skill. I
assume that workers pay a fixed cost, $c$, for learning skills in
a new discipline. That is, a worker pays $1+c$ to obtain the first
skill in a discipline, and $q=1$ for every additional skill in that
discipline. For simplicity, I will assume that $M=K+c$. This assumption
means that a specialist can obtain all $K$ skills in one discipline,
and a generalist can obtain a total of $K-c$ skills spread over the
two disciplines. 

Although workers in period 1 do not know the particular problem they
will face in period 2, they do know the distribution, $\Delta$, from
which those problems will be drawn. In particular, they know the probability
that each skill will be an H skill or an L skill. For simplicity,
I will make two assumptions about the distribution of problems: 1)
skills are \emph{independent, }meaning that the probability that skill
$s_{kd}$ is an H skill is independent of the probability that skill
$s_{k'd'}$ is an H skill%
\footnote{This assumption means that skills must be applied more or less independently.
That is, it cannot be the case that skills are used in combination
to solve problems, or that skills build on one another.%
} and 2) skills are \emph{symmetric within disciplines}, meaning that
every skill in a discipline has an equal probability of being an $H$
skill.%
\footnote{This assumption simplifies the decision making process for generalists.
When skills are symmetric within a discipline, a generalist's skill
acquisition decision is simply a division of her skills across the
two disciplines--within a discipline, she can choose her skills at
random. %
} 

This knowledge of the distribution of problems can be translated into
knowledge about individual skills. Let $\delta_{d}$ be the probability
that a skill in discipline $d$ is an $H$ skill--that is, $\delta_{d}=E\left[Prob\left(y_{kd}=H\right)\right]$
where the expectation is taken over the distribution of problems,
$\Delta$. The vector of probabilities in the two disciplines, $\delta=\left[\delta_{1},\delta_{2}\right]$,
is known \emph{ex ante.} 

Workers choose their skills in period 1 to maximize their expected
probability of solving the problem in period 2. A Nash equilibrium
of this game is a choice of skill set for each worker in the population,
$A=\left\{ A_{1}...A_{N}\right\} ,$ such that no worker has an incentive
to unilaterally change her skill set, given the distribution of problems.

\section{\label{sec: Specialization and Barriers Between Disciplines}Results:
Specialization and Barriers Between Disciplines}

In this section, I consider two questions. The first question concerns
individual decision-making--what is the equilibrium skill acquisition
decision of the workers? Under what conditions do individuals decide
to generalize? The second question concerns the optimality of that
population from an organizational perspective. Is the equilibrium
population optimal?

Note that in order to simplify the exposition, I will consider a special
case where all disciplines are equally useful in expectation--that
is, where $\delta_{1}=\delta_{2}=\delta$. It is straightforward to
generalize the results to a case where $\delta_{1}\ne\delta_{2}$
(see Appendix for the details).

\subsection{Equilibrium Skill Populations with No Barriers between Disciplines}

Given that generalists pay a significant penalty for diversifying
their skills, it is difficult to explain the existence of generalists
in the population. Theorem \ref{thm:If Open No Generalists} states
that if workers can work on any available problem, then there will
be no generalists in equilibrium.
\begin{thm}
\label{thm:If Open No Generalists}If skills are independent and symmetric
within discipline, and workers can work on any available problem,
then no worker will ever want to be a generalist and the equilibrium
population will contain only specialists.\end{thm}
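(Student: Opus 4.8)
The plan is to reduce each worker's objective to a function of a single number: the total count of skills she acquires. Fix a worker with skill set $A$. She solves a drawn problem precisely when at least one of her skills succeeds, so her probability of \emph{failure} on a fixed problem $y$ is the product, over $s_{kd}\in A$, of that skill's failure probability on $y$, which is $h$ when $y_{kd}=H$ and $l$ when $y_{kd}=L$. Averaging over $\Delta$ and using independence to pass the expectation through the product, each factor becomes the marginal failure probability of a single skill, namely $\delta h+(1-\delta)l$ (here I invoke symmetry within disciplines together with the maintained special case $\delta_1=\delta_2=\delta$, so that every skill contributes the same factor). Hence the expected probability of solving is
\[
1-\bigl(\delta h+(1-\delta)l\bigr)^{|A|},
\]
which depends on $A$ only through $|A|$.

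Next I would record two elementary facts. First, since $0<H<1$ and $0<L<1$ give $0<\delta h+(1-\delta)l<1$, the expression above is strictly increasing in $|A|$: more skills always help. Second, the capacity constraint $M=K+c$ lets a specialist exhaust her budget on all $K$ skills of a single discipline, whereas a generalist pays the fixed entry cost $c$ in \emph{both} disciplines and can therefore afford only $K-c$ skills in total. Because $c>0$, a generalist necessarily holds strictly fewer skills than a specialist, so by monotonicity her solve probability is strictly smaller.

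To finish, I would translate this dominance into the equilibrium statement. In the no-barriers regime a worker may attempt whatever problem is drawn, so her payoff is exactly the solve probability computed above and does not depend on the skill choices of the other workers. Specialization is therefore a strictly dominant action for every worker, independently of the rest of the population, and the unique Nash equilibrium consists entirely of specialists.

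The substantive step --- and the only place where the hypotheses do real work --- is the collapse of the solve probability to a function of $|A|$ alone. Independence is what licenses interchanging expectation and product, while symmetry within disciplines (and $\delta_1=\delta_2$) is what makes each skill contribute the identical factor $\delta h+(1-\delta)l$, so that how a generalist splits her skills across the two disciplines is irrelevant and only the total count survives. This is precisely why diversification cannot pay here: with correlated or asymmetrically useful skills a generalist's hedging could in principle raise the solve probability, but under the stated assumptions it cannot, and depth (more skills) strictly beats breadth. The same count-based comparison extends to $\delta_1\neq\delta_2$ by letting the specialist choose the more useful discipline, a routine modification I would relegate to the appendix.
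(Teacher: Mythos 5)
Your proposal is correct and follows essentially the same route as the paper: both reduce the solve probability to $1-\bigl(\delta h+(1-\delta)l\bigr)^{|A|}$ (the paper via a binomial sum that collapses by the binomial theorem, you by passing the expectation through the product using independence) and then compare the exponents $K$ versus $K-c$. Your explicit remarks on monotonicity, strict dominance, and the appendix case $\delta_1\neq\delta_2$ merely make precise what the paper leaves implicit.
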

\begin{proof}
The \emph{ex ante }probability that a specialist in discipline $i$
will be able to solve a problem from a given distribution, $\Delta$,
is
\begin{eqnarray*}
E\left[P\left(S_{i}\right)\right] & = & \sum_{y}Prob\left(\mbox{one of skills solves }y\right)*\Delta\left(y\right)\\
 & = & \sum_{y}\left(1-Prob\left(\mbox{none do}\right)\right)*\Delta\left(y\right)\\
 & = & 1-\sum_{n_{i}=0}^{K}h^{n_{i}}l^{K-n_{i}}\left({K\atop n_{i}}\right)\delta^{n_{i}}\left(1-\delta\right)^{K-n_{i}}\\
 & = & 1-\left(\delta h+\left(1-\delta\right)l\right)^{K}
\end{eqnarray*}
where $n_{i}$ is the number of $H$ skills in discipline $i$ in
a particular problem, $y$. 

Now, consider a generalist who is spreading his skills across both
disciplines. The \emph{ex ante }probability that a generalist with
$x$ skills in discipline $1$ and $K-c-x$ skills in discipline 2
will solve a problem from a given distribution, $\Delta$, is
\begin{eqnarray*}
E\left[P\left(G\right)\right] & = & 1-\sum_{y}Prob\left(\mbox{none of skills solve }y\right)*\Delta\left(y\right)\\
 & = & 1-\left(\delta h+\left(1-\delta\right)l\right)^{K-c}
\end{eqnarray*}

$1-\left(\delta h+\left(1-\delta\right)l\right)^{K}>1-\left(\delta h+\left(1-\delta\right)l\right)^{K-c}$,
and thus no individual will ever be a generalist in two disciplines.
(See Appendix for the same result with $\delta_{1}\ne\delta_{2}$)
\end{proof}
Note that this result generalizes to a case with more than two disciplines.
Generalists do worse as they add skills in additional disciplines,
so this result holds regardless of the number of disciplines a generalist
spreads himself across.

\subsection{Equilibrium Skill Populations with Barriers between Disciplines}

Theorem \ref{thm:If Open No Generalists} clearly indicates that when
workers can solve problems in other fields, there is no advantage
to being a generalist. However, in practice, there may be many barriers
between disciplines that prevent a worker in one discipline from solving
problems in another. Cultural or institutional barriers may prevent
her from working on questions in other disciplines, either because
resources are not forthcoming or because it is difficult to get compensated
for work in other areas. Communication barriers are also a significant
impediment to interdisciplinary work--although a software engineers
may have skills useful in solving user interface problems, field-specific
jargon may make it difficult for her to communicate her insights.
If communication barriers are severe enough, she may even have difficulty
understanding what open questions exist. Finally, a person in one
field may simply be unaware of problems that exist in other fields,
even if her skills would be useful in solving them. 

Barriers to working on problems outside ones discipline give us the
ability to talk about the {}``scope'' of a worker's inquiry. Generalists
are able to work on a broader set of problems, and thus their scope
is larger than that of specialists. There is therefore a tradeoff
between the depth of skill gained through specialization and the scope
gained through generalization. A specialist has a depth of skill that
gives her a good chance of solving the limited set of problems in
the area she specializes in. Generalists have a limited number of
skills, but are able to apply those skills to a much broader set of
problems. Thus, the choice between being a specialist and a generalist
can be framed in terms of a tradeoff between the depth of one's skill
set and scope of one's problem set. 

More formally, choice of whether to specialize will depend on two
parameters. First, let $\pi\left(\delta,h,l\right)\equiv\left(\delta h+\left(1-\delta\right)l\right)$
be the expected probability that a skill won't be able to solve a
problem drawn from $\Delta$. When $\pi$ is large, the probability
that any one skill will solve the problem is very low. Thus, we can
think of problems becoming more difficult as $\pi$ increases. Second,
let $\phi$ be the fraction of all problems that occur in discipline
1. When $\phi$ is very large or very small, most of the problems
fall in one field or another, limiting the value of increasing the
scope of the problem set. 

These two parameters--$\phi$ and $\pi$--define a range in which
workers will choose to generalize in equilibrium. This range is illustrated
in Figure \ref{fig:Generalists Range}. This diagram illustrates the
tradeoff between depth and scope. When problems are easy to solve,
scope is more valuable than depth. However, as $\pi$ increases and
problems become more difficult, depth wins out over scope, and the
range in which individuals choose to generalize shrinks. 

\begin{figure}
\includegraphics[width=0.9\columnwidth]{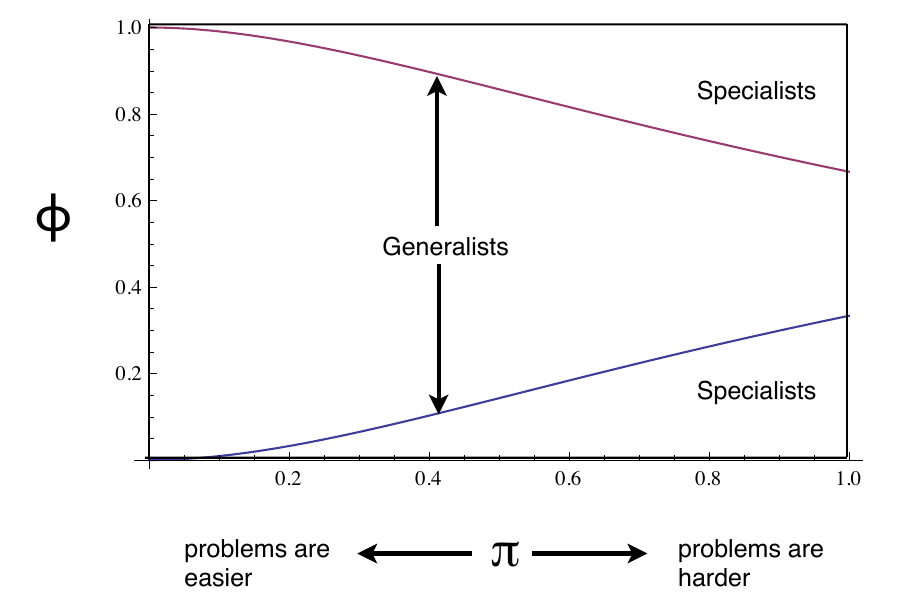}

\caption{\label{fig:Generalists Range}Equilibrium skill acquisition decisions
when $k=3$ and $c=1$. }

\end{figure}

Theorem \ref{thm:Communication Barriers and Generalists} summarizes
these results.
\begin{thm}
\label{thm:Communication Barriers and Generalists}If skills are independent
and symmetric within discipline, and there are barriers to working
on problems in other disciplines, then workers will generalize if
$1-\left(\frac{1-\pi^{K-c}}{1-\pi^{K}}\right)\le\phi\le\frac{1-\pi^{K-c}}{1-\pi^{K}}$
where $\phi$ is the fraction of problems assigned to discipline 1.
If $\phi>\frac{1-\pi^{K-c}}{1-\pi^{K}}$, then workers will all specialize
in discipline 1 and if $\phi<1-\left(\frac{1-\pi^{K-c}}{1-\pi^{K}}\right)$
then workers will all specialize in discipline 2.\end{thm}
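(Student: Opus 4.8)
The plan is to reduce the equilibrium question to a pointwise comparison of three expected payoffs. A worker's expected probability of solving the period-2 problem does not depend on the choices of the other workers, so each worker simply maximizes her own success probability; a Nash equilibrium is then just a profile in which every (identical) worker plays an individual best response. It therefore suffices to compute, as functions of $\phi$ and $\pi$, the expected payoff of the three relevant pure strategies---specializing in discipline 1, specializing in discipline 2, and generalizing---and to determine which is largest at each $\phi$.

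First I would record the payoff of each strategy, reusing the computation from Theorem \ref{thm:If Open No Generalists} that a worker who applies $m$ skills to a problem she can access solves it with probability $1-\pi^{m}$. With barriers, a specialist's \emph{scope} is confined to her own discipline: a discipline-1 specialist can access only the fraction $\phi$ of problems that fall in discipline 1, applying all $K$ of her skills to each, which gives $V_{S_1}=\phi\left(1-\pi^{K}\right)$, and symmetrically $V_{S_2}=(1-\phi)\left(1-\pi^{K}\right)$. A generalist, having at least one skill in each discipline, has access to every problem; applying all $K-c$ of her skills she solves any accessed problem with probability $1-\pi^{K-c}$, so $V_{G}=1-\pi^{K-c}$. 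The key point to justify carefully here is that, by the independence and within-discipline symmetry assumptions, the generalist's success probability depends only on the total number of skills $K-c$ and not on how they are split across disciplines---exactly as in the proof of Theorem \ref{thm:If Open No Generalists}---so every generalist allocation that retains access to both disciplines yields the same $V_{G}$, while any allocation that abandons a discipline is strictly dominated by specializing.

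Next I would carry out the three pairwise comparisons. Writing $R\equiv\frac{1-\pi^{K-c}}{1-\pi^{K}}$, the inequality $V_{G}\ge V_{S_1}$ is equivalent to $\phi\le R$, the inequality $V_{G}\ge V_{S_2}$ is equivalent to $\phi\ge 1-R$, and $V_{S_1}\ge V_{S_2}$ is equivalent to $\phi\ge\tfrac12$. Assembling these: the generalist strategy is a best response exactly when $1-R\le\phi\le R$; when $\phi>R$ we have both $V_{S_1}>V_{G}$ and $V_{S_1}>V_{S_2}$, so every worker specializes in discipline 1; and when $\phi<1-R$ the symmetric argument forces specialization in discipline 2. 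Since workers are identical and non-interacting, this pointwise best response pins down the equilibrium population (up to tie-breaking on the boundaries), which is the claim.

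The step I expect to be the main obstacle---really the only nontrivial modeling point rather than a calculation---is pinning down the payoff formulas, and in particular justifying that the barrier restricts a worker's problem \emph{scope} (which problems she can access) while leaving the per-problem solve probability governed only by her total skill count. Establishing that $V_{G}$ is allocation-independent, and that a generalist who concentrates all $K-c$ skills in one discipline is dominated by a true specialist, is what makes the clean threshold $R$ emerge; once the three payoffs are correct, the remaining comparisons are routine algebra. I would also note in passing that the stated interval $1-R\le\phi\le R$ is nonempty precisely when $R\ge\tfrac12$, consistent with the shrinking generalist range depicted in Figure \ref{fig:Generalists Range} as $\pi$ grows.
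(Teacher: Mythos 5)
Your proposal is correct and follows essentially the same route as the paper's own proof: compute the three payoffs $\phi\left(1-\pi^{K}\right)$, $\left(1-\phi\right)\left(1-\pi^{K}\right)$, and $1-\pi^{K-c}$, and read off the thresholds from the pairwise comparisons. Your added remarks on allocation-independence of the generalist's payoff and on nonemptiness of the interval are sensible elaborations of steps the paper leaves implicit, but they do not change the argument.
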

\begin{proof}
In this case, the \emph{ex ante} probability that a problem is solved
by a specialist is $\phi\left(1-\left(\delta h+\left(1-\delta\right)l\right)^{K}\right)$
for a specialist in discipline 1 and $\left(1-\phi\right)\left(1-\left(\delta h+\left(1-\delta\right)l\right)^{K}\right)$
for a specialist in discipline 2. Since generalists can work on problems
in both disciplines, their expected probability of solving the problem
is $1-\left(\delta h+\left(1-\delta\right)l\right)^{K-c}$. A worker
will generalize if $E\left[P\left(S_{1}\right)\right]<E\left[P\left(G\right)\right]$
and $E\left[P\left(S_{2}\right)\right]<E\left[P\left(G\right)\right]$.
The result follows immediately. (See Appendix for the same result
with $\delta_{1}\ne\delta_{2}$)
\end{proof}
Note that the size of the regions in which workers specialize depends
on how costly it is to diversify ones skills. As the fixed cost of
learning something in a new discipline increases ($c\uparrow$), the
regions in which people specialize grow.

\subsection{Optimality of the Equilibrium}

In this section, I consider whether this distribution of specialists
and generalists in the population is optimal, from a societal perspective.
There is reason to believe that it would not be. From a societal standpoint,
we would like to maximize the probability that someone manages to
solve the problem. This means that as a society, we would prefer to
have problem solvers apply as wide a range of skills as possible.
But workers who diversify their skills obtain fewer skills overall,
which tends to make the individual want to specialize. The result
of this disconnect between individual and social welfare is a range
in which generalists are under provided (see Figure \ref{fig:Regions of Social Suboptimality}).
As problems become more difficult, this region of suboptimality grows,
as is illustrated in Figure \ref{fig:Regions of Social Suboptimality Grow}. 

\begin{figure}
\includegraphics[clip,width=0.9\columnwidth]{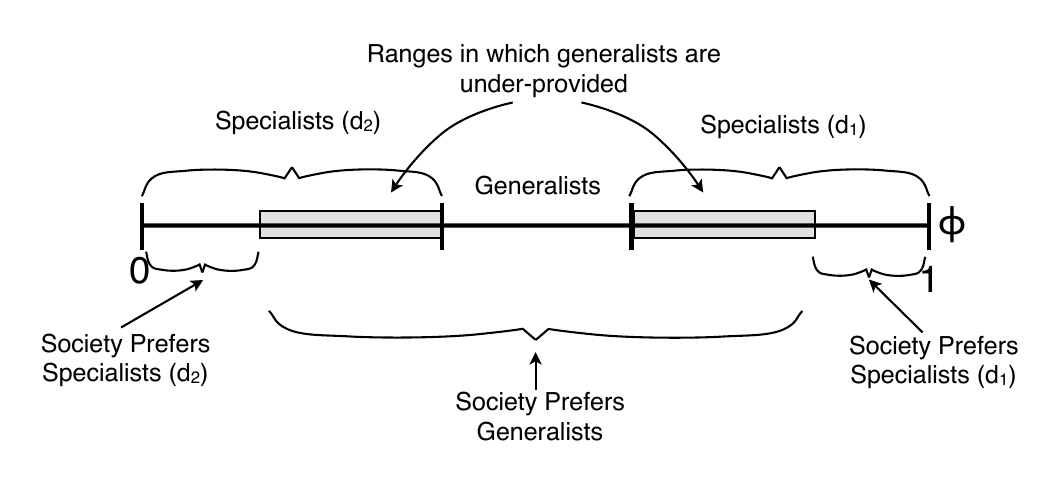}

\caption{\label{fig:Regions of Social Suboptimality}}
\end{figure}

\begin{figure}
\includegraphics[width=1\columnwidth]{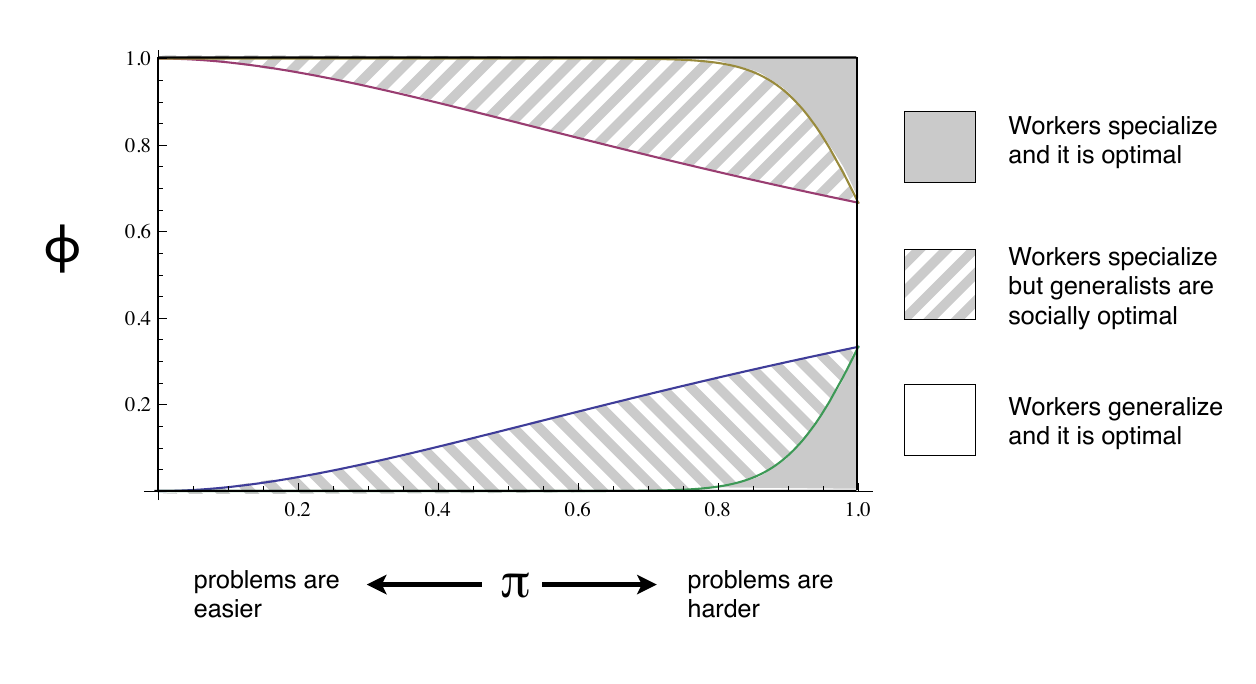}

\caption{\label{fig:Regions of Social Suboptimality Grow}Regions of social
suboptimality for $k=3$, $ $$c=1$, $N=10$ }
\end{figure}

Theorem \ref{thm: Optimality of the Equilibrium} summarizes these
results.
\begin{thm}
\label{thm: Optimality of the Equilibrium}If skills are independent
and symmetric within discipline, and there are barriers to working
on problems in other disciplines, then there is a range of values
for $\phi$ (the fraction of problems assigned to discipline 1) such
that generalists are underprovided in the equilibrium population of
problem solvers. 

In particular, generalists are underprovided when $\frac{1-\pi^{K-c}}{1-\pi^{K}}<\phi<\frac{1-\pi^{N\left(K-c\right)}}{1-\pi^{NK}}$
or $1-\frac{1-\pi^{N\left(K-c\right)}}{1-\pi^{NK}}<\phi<1-\frac{1-\pi^{K-c}}{1-\pi^{K}}$. \end{thm}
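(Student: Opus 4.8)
The plan is to exhibit, for the stated range of $\phi$, a population containing generalists whose social welfare strictly exceeds that of the equilibrium population, where social welfare is the probability that at least one worker solves the drawn problem. By Theorem \ref{thm:Communication Barriers and Generalists}, when $\phi>\frac{1-\pi^{K-c}}{1-\pi^{K}}$ the equilibrium consists entirely of specialists in discipline 1. Extending the per-worker computation of Theorems \ref{thm:If Open No Generalists}--\ref{thm:Communication Barriers and Generalists} to the whole population (each worker's skills and attempts being independent, so that the combined probability that a problem goes unsolved is $\pi$ raised to the total number of skill-applications that can reach it), this equilibrium solves a discipline-1 problem with probability $1-\pi^{NK}$ and a discipline-2 problem with probability $0$, giving welfare $W_S=\phi\,(1-\pi^{NK})$. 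Against this I would compare the all-generalist population, in which the $N(K-c)$ pooled skills reach a problem in either discipline, giving welfare $W_G=1-\pi^{N(K-c)}$.

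First I would show that $W_G>W_S$ holds exactly when $\phi<\frac{1-\pi^{N(K-c)}}{1-\pi^{NK}}$, which is immediate from the two welfare expressions once one divides by the positive quantity $1-\pi^{NK}$. Combined with the equilibrium condition $\phi>\frac{1-\pi^{K-c}}{1-\pi^{K}}$, this means that on the interval $\frac{1-\pi^{K-c}}{1-\pi^{K}}<\phi<\frac{1-\pi^{N(K-c)}}{1-\pi^{NK}}$ the equilibrium contains no generalists, yet the all-generalist population (which has $N>0$ of them) does strictly better; hence the welfare-maximizing allocation employs strictly more generalists than the equilibrium does, i.e. generalists are underprovided. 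This is precisely the first interval in the statement. Note that I do not need to identify the exact welfare-maximizing allocation: it suffices to produce one allocation containing generalists that beats the equilibrium.

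The crux of the argument is to verify that this interval is nonempty, i.e. that $\frac{1-\pi^{K-c}}{1-\pi^{K}}<\frac{1-\pi^{N(K-c)}}{1-\pi^{NK}}$ whenever $N>1$. I would prove the stronger statement that $f(m)\equiv\frac{1-\pi^{m(K-c)}}{1-\pi^{mK}}$ is strictly increasing in $m$. Writing $a=\pi^{K-c}$ and $b=\pi^{K}$, so that $0<b<a<1$ (using $0<\pi<1$ and $0<c<K$, whence $b=a\,\pi^{c}<a$), this reduces to showing that $t\mapsto\frac{1-a^{t}}{1-b^{t}}$ is increasing. Taking the logarithmic derivative reduces the claim to the monotonicity in $x\in(0,1)$ of $\frac{-x^{t}\ln x}{1-x^{t}}$, and after the substitution $u=x^{t}$ to the monotonicity of $\chi(u)=\frac{-u\ln u}{1-u}$ on $(0,1)$. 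Direct differentiation gives $\chi'(u)=\frac{u-1-\ln u}{(1-u)^{2}}$, and $u-1-\ln u>0$ on $(0,1)$ by the standard inequality $\ln u<u-1$; this is the one genuine calculation and the main obstacle, everything else being bookkeeping. With $f$ increasing we get $f(N)>f(1)$, so the interval is nonempty.

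Finally, the symmetry of the model under interchanging the two disciplines (replacing $\phi$ by $1-\phi$) converts the argument above into its mirror image for specialists in discipline 2: on $1-\frac{1-\pi^{N(K-c)}}{1-\pi^{NK}}<\phi<1-\frac{1-\pi^{K-c}}{1-\pi^{K}}$ the equilibrium is all specialists in discipline 2, while the all-generalist population is again strictly better, yielding the second interval. Together these give both underprovision ranges asserted in the theorem.
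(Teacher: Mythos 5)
Your proposal is correct and follows essentially the same route as the paper's own proof: compute the population-level success probabilities $\phi\left(1-\pi^{NK}\right)$ for all-specialists and $1-\pi^{N\left(K-c\right)}$ for all-generalists, compare them, and intersect with the equilibrium region from Theorem \ref{thm:Communication Barriers and Generalists}, then appeal to symmetry for the discipline-2 interval. The one place you go beyond the paper is in actually proving that $m\mapsto\frac{1-\pi^{m\left(K-c\right)}}{1-\pi^{mK}}$ is increasing (via the monotonicity of $u\mapsto\frac{-u\ln u}{1-u}$ on $\left(0,1\right)$), a fact the paper asserts without argument; your verification of that inequality is correct and closes a genuine gap in the published proof.
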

\begin{proof}
The probability that at least one of the $N$ problem-solvers in the
population solves the problem is $1-Prob\left(\mbox{none of them do}\right)$.
If all of the individuals in the population are specialists in discipline
1, then with probability $\phi$, each specialist has a probability
$1-\pi^{K}$ of solving the problem and $\pi^{K}$ of not solving
it. With probability $1-\phi$, the problem is assigned to the other
discipline, and no specialist solves it. Thus, the probability of
someone in a population of discipline 1 specialists solving the problem
is 
\begin{eqnarray*}
Prob\left(\mbox{one of N solve it}\right) & = & 1-Prob\left(\mbox{none of N solve it}\right)\\
 & = & 1-\left[\phi Prob\left(\mbox{none solve problem in }d_{1}\right)\right.\\
 &  & \left.+\left(1-\phi\right)Prob\left(\mbox{none solve problem in }d_{2}\right)\right]\\
 & = & 1-\left[\phi Prob\left(\mbox{one fails}\right)^{N}+\left(1-\phi\right)*1\right]\\
 & = & 1-\left[\phi\left(\pi^{K}\right)^{N}+\left(1-\phi\right)*1\right]\\
 & = & \phi\left(1-\pi^{KN}\right)
\end{eqnarray*}
On the other hand, if they are all generalists, then the probability
of at least one solving the problem is 
\begin{eqnarray*}
Prob\left(\mbox{one of N solve it}\right) & = & 1-Prob\left(\mbox{none of N solve it}\right)\\
 & = & 1-\left(\pi^{K-c}\right)^{N}\\
 & = & 1-\pi^{N\left(K-c\right)}
\end{eqnarray*}

Society is better off with a population of generalists when $1-\pi^{N\left(K-c\right)}>\phi\left(1-\pi^{KN}\right)$,
which is true when $\phi<\frac{1-\pi^{N\left(K-c\right)}}{1-\pi^{NK}}$.
However, there is a population of generalists when $\phi\le\frac{1-\pi^{K-c}}{1-\pi^{K}}$
. It is always the case that $\frac{1-\pi^{K-c}}{1-\pi^{K}}\le\frac{1-\pi^{N\left(K-c\right)}}{1-\pi^{NK}}$.
So if $\frac{1-\pi^{K-c}}{1-\pi^{K}}<\phi<\frac{1-\pi^{N\left(K-c\right)}}{1-\pi^{NK}}$,
then society is better off with a population of generalists, but has
a population of specialists. 

We can make a similar argument for specialists in discipline 2. Society
is better off with a population of generalists when $1-\pi^{N\left(K-c\right)}>\left(1-\phi\right)\left(1-\pi^{KN}\right)$,
which is true when $\phi>1-\frac{1-\pi^{N\left(K-c\right)}}{1-\pi^{NK}}$.
However, there is a population of generalists when $\phi>1-\frac{1-\pi^{K-c}}{1-\pi^{K}}$
. It is always the case that $1-\frac{1-\pi^{N\left(K-c\right)}}{1-\pi^{NK}}\le1-\frac{1-\pi^{K-c}}{1-\pi^{K}}$.
So if $1-\frac{1-\pi^{N\left(K-c\right)}}{1-\pi^{NK}}<\phi<1-\frac{1-\pi^{K-c}}{1-\pi^{K}}$,
then society is better off with a population of generalists, but has
a population of specialists. (See Appendix for the same result with
$\delta_{1}\ne\delta_{2}$)
\end{proof}
Note that the size of the regions of suboptimality will depend on
the number of individuals in the population. As $N$ increases, the
suboptimal regions become larger.

\section{An Extension: Problems with Multiple Parts}

In the previous section, I showed that barriers to addressing problems
in other disciplines can induce problem solvers to diversify their
skills. In this section, I consider an extension of the previous model,
which highlights a second scenario in which individuals can be incentivized
to acquire skills in multiple disciplines: problems with multiple
parts. As problems become increasingly complicated, they may be broken
down into many different sub-problems. Although in some cases, these
subproblems may all be best addressed within a single discipline,
in others, different subproblems will be best addressed using different
skills. In this section, I show that when problems are \emph{multidisciplinary}--that
is, when different parts of a problem are best addressed using different
disciplines--then a population of generalists can be sustained.

\subsection{Problems With Multiple Parts}

As in the previous model, skills in the set $S$ are divided into
two disciplines, $d_{1}$ and $d_{2}$. Workers use their skills to
address a problem, the nature of which is not known \emph{ex ante.
}They will choose to be a specialist or generalist in period 1 to
maximize their chances of solving the problem in period 2.\emph{ }But
now, suppose each problem consists of two parts, $y^{1}$ and $y^{2}$.
In order to solve the problem, an individual must solve all parts
of the problem.%
\footnote{This is essentially an adaptation of Kremer's O-ring Theory (Kremer
(1993)).%
} Each part of the problem is addressed independently by the skills
in each of the disciplines. Thus, much as before, we can define the
parts of the problem by a matrix of probabilities that each skill
will solve the problem. That is, $y^{i}=\left[\begin{array}{ccc}
y_{11}^{i} &  & y_{12}^{i}\\
\vdots &  & \vdots\\
y_{K1}^{i} &  & y_{K2}^{i}
\end{array}\right]$ where $y_{kd}^{i}=H$ if skill $k$ in discipline $d$ has a high
probability of solving part $i$ and $L$ if it has a low probability
of solving part $i$.

As in the previous section, I will assume that for each part of the
problem, skills are independent\emph{ }($Prob(y_{kd}^{i}=H)$ uncorrelated
with $Prob(y_{k'd'}^{i}=H)$) and skills are symmetric within disciplines
($Prob(y_{kd}^{i}=H)=Prob(y_{jd}^{i}=H)$) . 

As before, the probability that a given skill is an $H$ skill is
not known \emph{ex ante.} However, the workers know the expected probability
that a skill is an $H$ skill. I will allow the expected probabilities
to vary across parts of the problem--in other words, it is possible
that a discipline will be more useful in solving one of the parts
of the problem than in solving the other part of the problem. Let
$\delta_{d}^{i}$ be the probability that a skill from discipline
$d$ is an H skill for part $i$ of the problem. That is, $\delta_{d}^{i}=E\left[Prob\left(y_{kd}^{i}=H\right)\right]$.
The matrix $\delta=\left[\begin{array}{cc}
\delta_{1}^{1} & \delta_{1}^{2}\\
\delta_{2}^{1} & \delta_{2}^{2}
\end{array}\right]$ describes a distribution of problems, $\Delta$, and is known \emph{ex
ante. }An entry in the $i^{th}$ column of that matrix is the vector
of probabilities that a skill in each of the disciplines will be useful
for solving part $i$ of the problem. 

We can categorize the problems according to the relative usefulness
of the two disciplines in the two parts of the problem. There are
two categories for the problems:
\begin{enumerate}
\item One discipline is as or more useful for both parts of the problem:
$\delta_{1}^{i}\ge\delta_{2}^{i}\forall i$ 
\item One discipline is more useful for part 1 and the other discipline
is more useful for part 2: $\delta_{1}^{i}>\delta_{2}^{i}$ and $\delta_{1}^{j}<\delta_{2}^{j}$ 
\end{enumerate}
These categories are illustrated in Figure \ref{fig:Taxonomy of Distributions}. 

\begin{figure}
\includegraphics[clip,width=1\columnwidth]{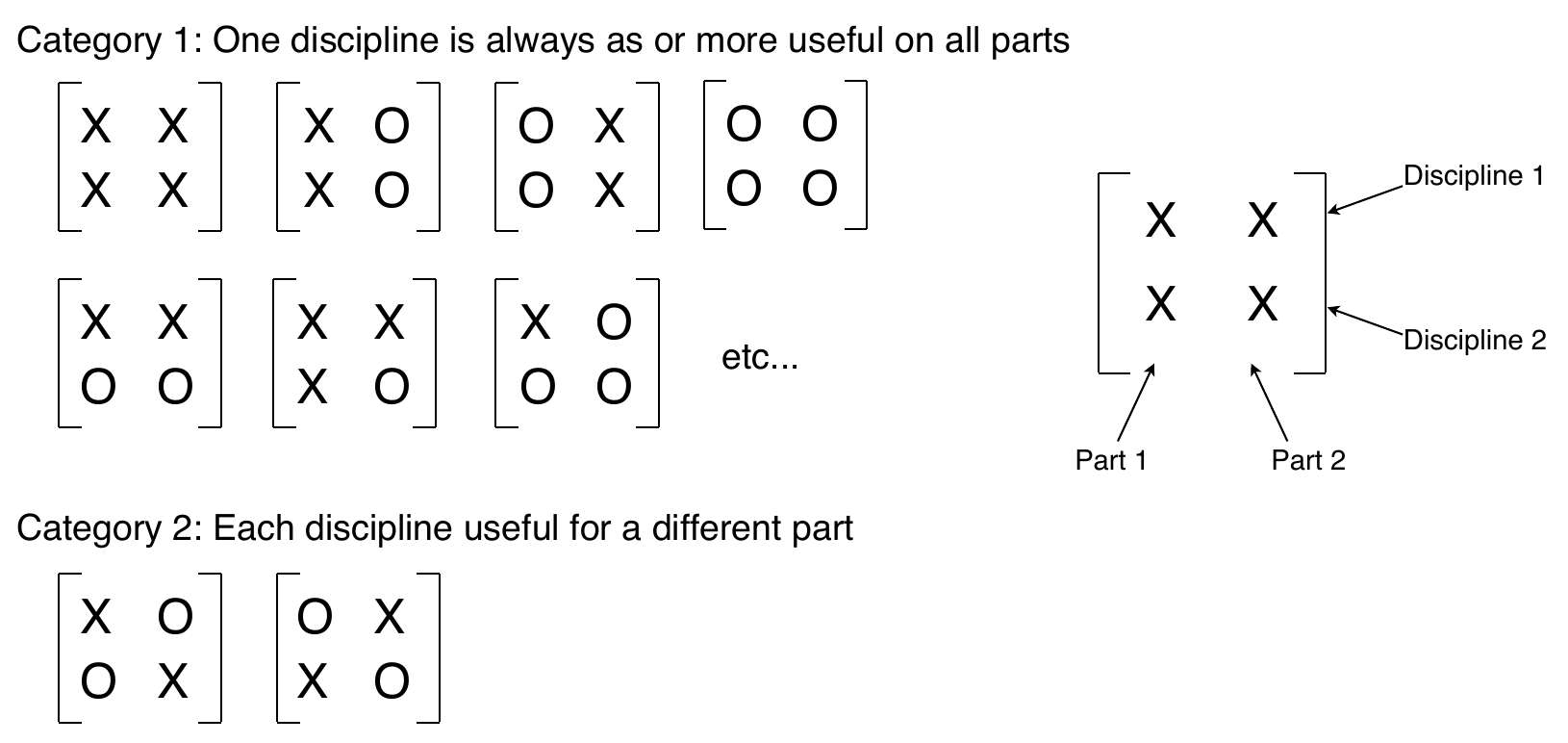}

\caption{\label{fig:Taxonomy of Distributions}}
\end{figure}

If a problem falls into the first category, then the results are similar
to those obtained in Section \ref{sec: Specialization and Barriers Between Disciplines}.
In particular, if there are no barriers to working on problems in
other disciplines, then all workers will specialize. If a problem
falls into the second category, then the results do not resemble any
of those already explored. Problems with multiple parts, each of which
is best addressed within the context of a different discipline, are
often referred to as \emph{multidisciplinary. }Generalists have an
advantage in multidisciplinary problems, because they can apply different
types of skills to different parts of a problem. For example, suppose
a scientist is look at nerve conduction in an organism. That problem
may have elements are are best addressed using biological tools, and
other elements that are best addressed using physics tools. An individual
with both biology and physics skills will have an advantage over someone
who is forced to use (for example) physics skills to solve both parts
of the problem. The below states that when problems are multidisciplinary,
it can be rational to be a generalist, even in the absence of barriers
to working in other fields. 

More formally, suppose that if a worker uses the {}``right'' discipline
for a part of a problem, then there is a probability $\delta_{1}$
that a skill in that discipline is useful ($\delta_{1}=Prob\left(y_{kd}^{i}=H\right)$
when $d$ is the right discipline to use for part $i$ of the problem).
If she uses the {}``wrong'' discipline, then there is a probability
$\delta_{0}$ that a skill in that discipline is useful ($\delta_{1}=Prob\left(y_{kd}^{i}=H\right)$
when $d$ is the wrong discipline to use for part $i$ of the problem).
This is without loss of generality, because the only thing that makes
a problem multidisciplinary is the ordering of the usefulness of the
disciplines. Further, let $\pi_{1}=\delta_{1}h+\left(1-\delta_{1}\right)l$
and $\pi_{0}=\delta_{0}h+\left(1-\delta_{0}\right)l$. These represent
the probability that a skill in the right discipline will not solve
a part of a problem and the probability that a skill in the wrong
discipline will not solve part of the problem. Note that $\pi_{1}<\pi_{0}$. 

When the efficacy of the two disciplines is very different ($\pi_{1}\ll\pi_{0}$),
then using the right skill for the job has a large effect on the probability
of solving the problem as a whole, and it will be rational to obtain
skills in multiple disciplines. Figure \ref{fig:Regions of Specialization in Two Part Problem}
illustrates the region in which individuals choose to be generalists
and specialists, and Theorem \ref{Thm: Multidisciplinary Problems}
summarizes the result.
\begin{figure}[h]
\includegraphics[width=0.9\columnwidth]{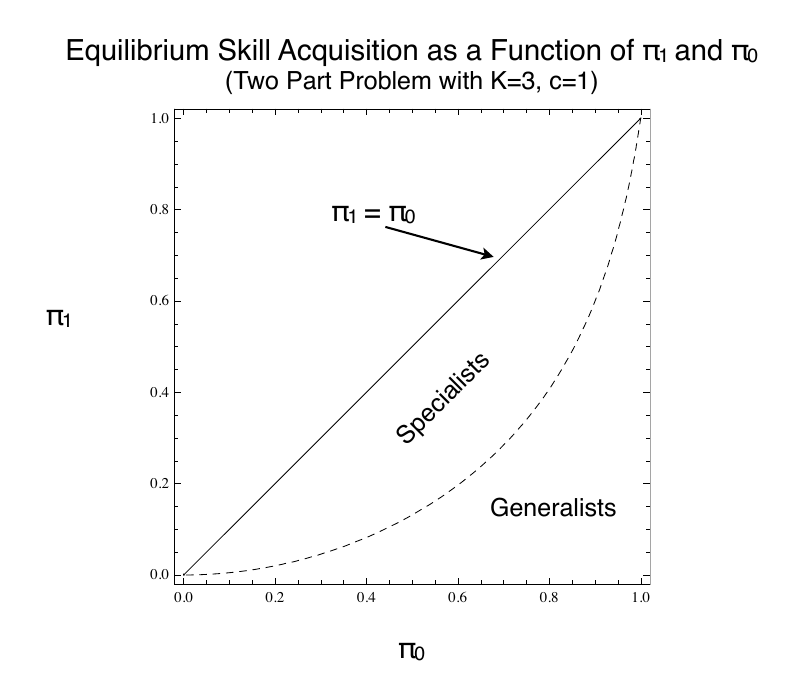}

\caption{\label{fig:Regions of Specialization in Two Part Problem}Equilibrium
skill acquisition decisions when problems are multidisciplinary, and
$k=3$ and $c=1$. }
\end{figure}

\begin{thm}
\label{Thm: Multidisciplinary Problems}If skills are independent
and symmetric within discipline, and problems multidisciplinary (eg:
$\delta=\left[\begin{array}{cc}
\delta_{1} & \delta_{0}\\
\delta_{0} & \delta_{1}
\end{array}\right]$ with $\delta_{1}>\delta_{0}$) then there is a set of values of $\pi_{1}=\delta_{1}h+\left(1-\delta_{1}\right)l$
and $\pi_{0}=\delta_{0}h+\left(1-\delta_{0}\right)l$ such that it
is individually optimal for workers to be generalists, even when there
are no barriers to solving problems in other fields. In particular,
workers will become generalists when $\left(1-\pi_{1}^{\frac{K-c}{2}}\pi_{0}^{\frac{K-c}{2}}\right)^{2}>\left(1-\pi_{1}^{K}\right)\left(1-\pi_{0}^{K}\right)$\end{thm}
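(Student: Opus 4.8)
The plan is to compute the expected probability that each type of worker solves the full two-part problem and then compare. Because a worker must solve \emph{both} parts and the two parts are addressed independently, each type's success probability factors as a product of per-part success probabilities. The only genuine work is pinning down the generalist's optimal allocation of skills across disciplines.

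First I would handle the generalist. Suppose she places $x$ skills in discipline 1 and $K-c-x$ in discipline 2. On part 1, discipline 1 is the right discipline and discipline 2 the wrong one, so the probability that none of her skills solves part 1 is $\pi_1^x\pi_0^{K-c-x}$; on part 2 the roles reverse, giving $\pi_0^x\pi_1^{K-c-x}$. Hence
\begin{equation*}
E\left[P(G)\right]=\left(1-\pi_1^x\pi_0^{K-c-x}\right)\left(1-\pi_0^x\pi_1^{K-c-x}\right).
\end{equation*}
The key observation, and the step I expect to be the crux, is that the \emph{product} of the two failure terms equals $\pi_1^{K-c}\pi_0^{K-c}=(\pi_1\pi_0)^{K-c}$, which is independent of $x$. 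Writing $a=\pi_1^x\pi_0^{K-c-x}$ and $b=\pi_0^x\pi_1^{K-c-x}$, we get $E[P(G)]=1-(a+b)+ab$ with $ab$ fixed, so maximizing over the allocation is equivalent to minimizing $a+b$. By AM--GM, $a+b\ge 2\sqrt{ab}$ with equality exactly when $a=b$, i.e. when $x=K-c-x$, so the symmetric split $x=(K-c)/2$ is optimal (assuming, as the statement implicitly does, that $K-c$ is even). At that split $a=b=\pi_1^{(K-c)/2}\pi_0^{(K-c)/2}$, giving
\begin{equation*}
E\left[P(G)\right]=\left(1-\pi_1^{\frac{K-c}{2}}\pi_0^{\frac{K-c}{2}}\right)^{2}.
\end{equation*}

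Next I would compute the specialist's probability. A specialist in discipline 1 applies all $K$ skills to both parts: on part 1 her discipline is right, so she solves it with probability $1-\pi_1^K$, while on part 2 her discipline is wrong, so she solves it with probability $1-\pi_0^K$. Independence of the parts gives $E[P(S_1)]=(1-\pi_1^K)(1-\pi_0^K)$, and by the symmetry of the multidisciplinary matrix $\delta$ the discipline-2 specialist has the identical value. Since there are no barriers, every worker can attempt every problem, so a worker strictly prefers to generalize precisely when $E[P(G)]>E[P(S_1)]=E[P(S_2)]$, which is exactly the stated inequality $\left(1-\pi_1^{\frac{K-c}{2}}\pi_0^{\frac{K-c}{2}}\right)^{2}>(1-\pi_1^K)(1-\pi_0^K)$.

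Finally, to establish that this generalist region is nonempty I would exhibit parameters satisfying the inequality: fix any $\pi_1<1$ and let $\pi_0\uparrow 1$, so the wrong discipline becomes nearly useless. Then the right-hand side tends to $(1-\pi_1^K)\cdot 0=0$ while the left-hand side tends to $\left(1-\pi_1^{(K-c)/2}\right)^{2}>0$, so the inequality holds on an open set of pairs $(\pi_1,\pi_0)$ with $\pi_1\ll\pi_0$, as claimed. The main obstacle is not this limiting argument but the allocation step above; once the product-invariance and AM--GM observation is in hand, everything else is direct substitution.
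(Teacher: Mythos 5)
Your proof is correct and follows essentially the same route as the paper: compute $E[P(S_i)]=\left(1-\pi_1^K\right)\left(1-\pi_0^K\right)$ and $E[P(G)]=\left(1-\pi_1^{\frac{K-c}{2}}\pi_0^{\frac{K-c}{2}}\right)^2$, then compare. The one place you go beyond the paper is the allocation step --- the paper simply asserts that the even split is optimal for the generalist, whereas your observation that the product of the two failure terms is $\left(\pi_1\pi_0\right)^{K-c}$ independent of $x$, combined with AM--GM, actually proves it; your limiting argument ($\pi_0\uparrow 1$) also supplies an explicit witness that the generalist region is nonempty, which the paper only illustrates by figure.
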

\begin{proof}
WLOG, consider the case where $\delta=\left[\begin{array}{cc}
\delta_{1} & \delta_{0}\\
\delta_{0} & \delta_{1}
\end{array}\right]$ with $\delta_{1}>\delta_{0}$. A specialist in discipline $i$ will
have $K$ skills in discipline $i$. The expected probability that
her skills will solve \emph{both }parts of the problem is $E\left[P\left(\mbox{success on part 1}\right)\right]*E\left[P\left(\mbox{success on part 2}\right)\right]$,
which is $\left(1-\pi_{1}^{K}\right)\left(1-\pi_{0}^{K}\right)$ where
$\pi_{1}=\delta_{1}h+\left(1-\delta_{1}\right)l$ and $\pi_{0}=\delta_{0}h+\left(1-\delta_{0}\right)l$. 

A generalist will have skills in both disciplines. In this case, it
will be optimal for a generalist to split her skills evenly between
the two disciplines, and she will obtain $\frac{K-c}{2}$ skills in
each. The expected probability that she will solve both parts of the
problem is $\left(1-\pi_{1}^{\frac{K-c}{2}}\pi_{0}^{\frac{K-c}{2}}\right)^{2}$. 

Thus, individuals choose to generalize, when $\left(1-\pi_{1}^{\frac{K-c}{2}}\pi_{0}^{\frac{K-c}{2}}\right)^{2}>\left(1-\pi_{1}^{K}\right)\left(1-\pi_{0}^{K}\right)$
.%
\footnote{Note that when $\delta_{1}=\delta_{0}$, we have a case that fits
into the first category in the taxonomy of problem distributions in
Figure \ref{fig:Taxonomy of Distributions}. Since $\delta_{1}=\delta_{0}\implies\pi_{1}=\pi_{0}$,
we can use this calculation to verify the claim made above that in
the case where skills are symmetric, the results are the same as in
Section \ref{sec: Specialization and Barriers Between Disciplines}.%
} This region, as a function of $\pi_{1}$ and $\pi_{0}$, is illustrated
in Figure \ref{fig:Regions of Specialization in Two Part Problem}
for $K=3$ and $c=1$. The boundary of this region is defined by the
equation $\pi_{1}^{K}+\pi_{0}^{K}=\left(\pi_{1}\pi_{0}\right)^{\frac{K-c}{2}}-2\left(\pi_{1}\pi_{0}\right)^{K-c}+\left(\pi_{1}\pi_{0}\right)^{K}$
. 
\end{proof}
As would be expected, the region where individuals specialize shrinks
as the costs to generalizing ($c$) become smaller, relative to the
individual's total capacity for learning new skills $\left(M=K+c\right)$.

\section{Conclusion}

Being a generalist is costly. Every new area of expertise comes at
considerable fixed cost, in the form of a new literature, new jargon,
and new basic ideas. However, there are clearly a large (and growing)
number of individuals in research communities who choose to do so.
This raises the question of whether that decision is ever individually
rational? And is there a reason to believe that fewer people choose
to be generalists than is socially optimal?

This paper suggests that being a generalist can be a rational decision
under particular conditions. In particular, obtaining a broad range
of skills is rational if there are significant barriers to working
on questions in fields with which one is unfamiliar. Those who pay
the initial price of learning the jargon and literature of a new field
reap the benefits in the form of a larger pool of problems to solve.
It can also be rational to be a generalist if problems are multidisciplinary--that
is, if different parts of a problem are best addressed using skills
in different disciplines. Moreover, because individuals bear the costs
of becoming generalists, we will tend to have fewer of them than is
optimal from a societal standpoint. This potential market failure
means that in some cases, it is optimal for funding agencies and private
organizations to subsidize individuals in their efforts to diversify
their skills and promote interdisciplinary researchers. However, it
is unclear whether our current situation is one in which such funding
is required. More careful consideration of this question is a good
candidate for further work.

There are several elements of this model that suggest directions for
future research. It would be interesting to consider a case where
individuals differ in their innate capacity for learning skills. This
might provide some insight into what types of individuals choose to
become generalists. Incorporating collaboration would be another particularly
interesting extension. Collaboration has always been an important
part of problem solving and innovation, and it has only become more
important over time (see, among others, Laband and Tollison (2000),
Acedo et al (2006), and Goyal et al (2006)). There is reason to believe
that in a collaborative context, the advantage to generalists would
be enhanced, because generalists could connect specialists in different
fields. 

On a more general level, there is much to be gained from a better
understanding of specialization decisions. Research universities,
government organizations such as NASA, and private enterprises ranging
from Genentec to Google are reliant on the skills of individual problem
solvers. The decisions these individuals make about the breadth skills
they obtain have an undeniable effect on the rate of innovation. However,
we still have only a limited understanding of the what drives those
skill acquisition decisions, and what distinguishes the role of specialists
and generalists in problem solving. Better theoretical models of these
decisions have the potential to greatly enhance our understanding
of this important aspect of such organizations.

\section*{Appendix}

Theorem \ref{thm:If Open No Generalists-1} is the equivalent of Theorem
\ref{thm:If Open No Generalists}, and states that if individuals
can work on any available problem, then there is no advantage to being
a generalist.
\begin{thm}
\label{thm:If Open No Generalists-1}If skills are independent and
symmetric within discipline, and workers can work on any available
problem, then no worker will ever want to be a generalist and the
equilibrium population will contain only specialists.\end{thm}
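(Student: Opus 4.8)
The plan is to mirror the proof of Theorem~\ref{thm:If Open No Generalists} while keeping the two disciplines' parameters distinct throughout. First I would introduce the discipline-specific shorthand $\pi_d \equiv \delta_d h + (1-\delta_d) l$ for $d \in \{1,2\}$, the probability that a single skill drawn from discipline $d$ fails to solve a problem from $\Delta$. By independence and within-discipline symmetry, the summation carried out in that earlier proof goes through verbatim, discipline by discipline: a specialist holding all $K$ skills in discipline $d$ fails on a given problem with probability $\pi_d^K$, hence solves it with probability $1 - \pi_d^K$. A generalist holding $x$ skills in discipline 1 and $K-c-x$ in discipline 2 fails on the two disciplines independently, so solves the problem with probability $1 - \pi_1^x \pi_2^{K-c-x}$, where $0 < x < K-c$ because a generalist by definition holds skills in both disciplines.

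Second, because there are no barriers every worker can attempt any drawn problem, so her payoff depends only on her own skill set; the equilibrium therefore reduces to individual maximization, and it suffices to show that the best specialist strictly dominates every generalist. Without loss of generality take $\pi_1 \le \pi_2$, so the best specialist works in discipline 1 and achieves $1 - \pi_1^K$. The core of the argument is a two-step bound on the generalist's failure probability. Since $\pi_2 \ge \pi_1$, replacing the discipline-2 factor by a discipline-1 factor only decreases the product, giving $\pi_1^x \pi_2^{K-c-x} \ge \pi_1^x \pi_1^{K-c-x} = \pi_1^{K-c}$; and since $\pi_1 < 1$ and $c > 0$, the capacity penalty yields $\pi_1^{K-c} > \pi_1^K$. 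Chaining these, $\pi_1^x \pi_2^{K-c-x} \ge \pi_1^{K-c} > \pi_1^K$, so the generalist solves the problem with probability at most $1 - \pi_1^{K-c} < 1 - \pi_1^K$, strictly below the best specialist. Hence no worker prefers to generalize and every equilibrium population consists of specialists.

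The step I expect to carry the real content is the first inequality in that chain, since this is precisely where the asymmetric case differs from the symmetric one: a generalist may now tilt her allocation toward the more useful discipline, and one must rule out the possibility that a clever split beats specialization. The bound $\pi_2 \ge \pi_1$ disposes of this cleanly, because the best a generalist can do with her $K-c$ skills is behave as though all of them lay in the stronger discipline, which is exactly the specialist's formula with $K-c$ in place of $K$; the pure capacity penalty $c$ then finishes the argument. No delicate estimate is required, and the only points to watch are fixing the labelling convention $\pi_1 \le \pi_2$ and invoking $c > 0$ so that $K-c < K$ makes the final strict inequality valid. I would close, as in the symmetric case, by noting that the argument is indifferent to the number of disciplines, since spreading skills across any further discipline simply inserts an additional factor bounded below by a power of $\pi_1$, only strengthening the inequality.
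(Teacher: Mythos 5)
Your proposal is correct and follows essentially the same route as the paper's appendix proof: compute $1-\pi_d^K$ for the specialist and $1-\pi_1^x\pi_2^{K-c-x}$ for the generalist, then use $\pi_1\le\pi_2$ together with $\pi_1^{K-c}>\pi_1^K$ to show the best specialist dominates. The only cosmetic difference is that the paper first identifies the generalist's optimal allocation $x=K-c-1$ and compares that single point, whereas you bound all allocations uniformly by replacing $\pi_2$ with $\pi_1$; both steps rest on the same two inequalities.
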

\begin{proof}
As above, the \emph{ex ante }probability that a specialist in discipline
$i$ will be able to solve a problem from a given distribution, $\Delta$,
is
\begin{eqnarray*}
E\left[P\left(S_{i}\right)\right] & = & 1-\left(\delta_{i}h+\left(1-\delta_{i}\right)l\right)^{K}\\
 & = & 1-\pi_{i}^{K}
\end{eqnarray*}
 where $\pi_{i}=\left(\delta_{i}h+\left(1-\delta_{i}\right)l\right)$.

WLOG, suppose $\delta_{1}>\delta_{2}$. Since $h<l$, this means that
$\pi_{1}<\pi_{2}$ and $E\left[P\left(S_{1}\right)\right]>E\left[P\left(S_{2}\right)\right]$.
Thus, to determine whether any individual will generalize, I need
to compare $E\left[P\left(S_{1}\right)\right]$ to $E\left[P\left(G\right)\right]$.

The \emph{ex ante }probability that a generalist with $x$ skills
in discipline 1, and $K-c-x$ skills in discipline 2 solves a problem
from a given distribution, $\Delta$, is 
\begin{eqnarray*}
E\left[P\left(G\right)\right] & = & 1-\left(\delta_{1}h+\left(1-\delta_{1}\right)l\right)^{x}\left(\delta_{2}h+\left(1-\delta_{2}\right)l\right)^{K-c-x}\\
 & = & 1-\pi_{1}^{x}\pi_{2}^{K-c-x}
\end{eqnarray*}

Since $\pi_{1}<\pi_{2}$, $E\left[P\left(G\right)\right]$ is strictly
increasing in $x$. This means that a generalist will set $x=K-c-1$
and $E\left[P\left(G\right)\right]=1-\pi_{1}^{K-c-1}\pi_{2}$, which
is clearly less than $E\left[P\left(S_{1}\right)\right]=1-\pi_{1}^{K}$. 
\end{proof}
Theorem \ref{thm:Communication Barriers and Generalists-1} is a generalized
version of Theorem \ref{thm:Communication Barriers and Generalists},
and states the parameter range in which individuals will choose to
diversify their skills when there are barriers to working interdisciplinarily. 
\begin{thm}
\label{thm:Communication Barriers and Generalists-1}If skills are
independent and symmetric within discipline, and there are barriers
to working on problems in other disciplines, then there is a range
of values for $\phi$ (the fraction of problems assigned to discipline
1) for which individuals will generalize.

In particular, the ranges are as follows: 

If $\delta_{1}=\delta_{2}=\delta$, workers will obtain $K-c$ skills
spread across the two disciplines when $1-\frac{1-\pi^{K-c}}{1-\pi^{K}}\le\phi\le\frac{1-\pi^{K-c}}{1-\pi^{K}}$
, $K$ skills in discipline 1 when $\phi>\frac{1-\pi^{K-c}}{1-\pi^{K}}$,
and $K$ skills in discipline 2 when $\phi<1-\frac{1-\pi^{K-c}}{1-\pi^{K}}$. 

If $\delta_{1}>\delta_{2}$, then workers will obtain $K-c-1$ skills
in discipline 1 and one skill in discipline 2 when $1-\left(\frac{1-\pi_{1}^{K-c-1}\pi_{2}}{1-\pi_{2}^{K}}\right)\le\phi\le\frac{1-\pi_{1}^{K-c-1}\pi_{2}}{1-\pi_{1}^{K}}$
, $K$ skills in discipline 1 when $\phi>\frac{1-\pi_{1}^{K-c-1}\pi_{2}}{1-\pi_{1}^{K}}$,
and $K$ skills in discipline 2 when $\phi<1-\left(\frac{1-\pi_{1}^{K-c-1}\pi_{2}}{1-\pi_{2}^{K}}\right)$. 

If $\delta_{2}>\delta_{1}$, then workers will obtain $K-c-1$ skills
in discipline 2 and one skill in discipline 1 when $1-\left(\frac{1-\pi_{2}^{K-c-1}\pi_{1}}{1-\pi_{1}^{K}}\right)\le\phi\le\frac{1-\pi_{2}^{K-c-1}\pi_{1}}{1-\pi_{1}^{K}}$
, $K$ skills in discipline 1 when $\phi>\frac{1-\pi_{2}^{K-c-1}\pi_{1}}{1-\pi_{1}^{K}}$,
and $K$ skills in discipline 2 when $\phi<1-\left(\frac{1-\pi_{2}^{K-c-1}\pi_{1}}{1-\pi_{2}^{K}}\right)$. \end{thm}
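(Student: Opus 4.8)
The plan is to follow the template of Theorem~\ref{thm:Communication Barriers and Generalists} but now track the two disciplines separately. Under barriers, scope and depth decouple cleanly: a specialist has full depth but restricted scope, since she can only reach the fraction of problems assigned to her own discipline, whereas a generalist --- having a foothold in both disciplines --- can reach every problem and bring all of her skills to bear on each. Writing $\pi_i = \delta_i h + (1-\delta_i) l$, the three candidate expected payoffs are therefore $E[P(S_1)] = \phi(1-\pi_1^K)$, $E[P(S_2)] = (1-\phi)(1-\pi_2^K)$, and, for a generalist holding $x$ skills in discipline~1 and $K-c-x$ in discipline~2, $E[P(G)] = 1 - \pi_1^{x}\pi_2^{K-c-x}$. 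The crucial structural observation is that $E[P(G)]$ does not depend on $\phi$: because the generalist applies every skill to every problem she reaches, the split of problems across disciplines is irrelevant to her, and only her allocation of skills matters.

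First I would pin down the generalist's optimal allocation. Since $\pi_1^{x}\pi_2^{K-c-x} = \pi_2^{K-c}(\pi_1/\pi_2)^x$, this quantity is strictly monotone in $x$ whenever $\pi_1 \neq \pi_2$ --- exactly the monotonicity already exploited in Theorem~\ref{thm:If Open No Generalists-1}. Hence when $\delta_1 > \delta_2$ (so $\pi_1 < \pi_2$) the generalist loads as many skills as possible into discipline~1, keeping the single skill in discipline~2 she needs to retain access to that discipline, giving $x = K-c-1$ and $E[P(G)] = 1 - \pi_1^{K-c-1}\pi_2$. The case $\delta_2 > \delta_1$ is symmetric, and when $\delta_1 = \delta_2 = \delta$ the split is immaterial and $E[P(G)] = 1 - \pi^{K-c}$.

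With the generalist payoff fixed, the boundaries come from two pairwise comparisons. In the case $\delta_1 > \delta_2$, the inequality $E[P(G)] \ge E[P(S_1)]$ rearranges to $\phi \le (1-\pi_1^{K-c-1}\pi_2)/(1-\pi_1^K)$, and $E[P(G)] \ge E[P(S_2)]$ rearranges to $\phi \ge 1 - (1-\pi_1^{K-c-1}\pi_2)/(1-\pi_2^K)$; the generalist is optimal precisely on the intersection, which is the stated interval. The symmetric cases and the equal-$\delta$ case are handled identically. Because $E[P(S_1)]$ is increasing in $\phi$, $E[P(S_2)]$ is decreasing, and $E[P(G)]$ is constant, the pointwise maximum of the three payoffs is attained on three consecutive subintervals of $[0,1]$, which is what lets me read off the two specialization regions as the complements of the generalist interval.

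The step I expect to be the real work is not the algebra but verifying that these three intervals fit together consistently --- that on $\phi > (1-\pi_1^{K-c-1}\pi_2)/(1-\pi_1^K)$ the discipline-1 specialist beats not only the generalist but also the discipline-2 specialist, and dually on the left. This reduces to checking that the $E[P(S_1)] = E[P(S_2)]$ crossing point lies inside the generalist interval whenever that interval is non-empty, which follows from the monotonicity directions above together with the fact that at each endpoint the generalist exactly ties the relevant specialist. I would also note explicitly that the generalist interval is non-empty only when $(1-\pi_1^{K-c-1}\pi_2)\big[(1-\pi_1^K)^{-1} + (1-\pi_2^K)^{-1}\big] \ge 1$; for larger $c$ the interval collapses and every worker specializes, so the asserted ``range of values for $\phi$'' is genuinely present only for sufficiently small diversification cost.
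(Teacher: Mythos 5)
Your proposal is correct and follows essentially the same route as the paper's own proof: compute the three expected payoffs $\phi(1-\pi_1^K)$, $(1-\phi)(1-\pi_2^K)$, and $1-\pi_1^{x}\pi_2^{K-c-x}$, use monotonicity in $x$ to pin the generalist at $x=K-c-1$ when $\pi_1<\pi_2$, and read off the boundaries from the pairwise comparisons. Your added observations --- that the three payoffs are respectively increasing, decreasing, and constant in $\phi$ so the regions tile $[0,1]$ consecutively, and that the generalist interval can be empty for large $c$ --- are correct refinements of the paper's terser ``we can verify that in the appropriate ranges, individuals choose to specialize.''
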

\begin{proof}
In this case, the \emph{ex ante }probability that a problem is solved
by a specialist is $\phi\left(1-\pi_{1}^{K}\right)$ for a specialist
in discipline 1 and $\left(1-\phi\right)\left(1-\pi_{2}^{K}\right)$
for a specialist in discipline 2. Since generalists can work on problems
in both disciplines, their expected probability of solving the problem
is $1-\pi_{1}^{x}\pi_{2}^{K-c-x}$ where $x$ is the number of skills
the generalist chooses to acquire in discipline 1. First, suppose
$\delta_{1}>\delta_{2}$. Since $h<l$, this means that $\pi_{1}<\pi_{2}$
and $E\left[P\left(G\right)\right]$ is strictly increasing in $x$.
Thus, a generalist will choose a minimal number of skills in the less
useful discipline, and $E\left[P\left(G\right)\right]=1-\pi_{1}^{K-c-1}\pi_{2}$.

An individual will generalize if $E\left[P\left(S_{1}\right)\right]<E\left[P\left(G\right)\right]$
and $E\left[P\left(S_{2}\right)\right]<E\left[P\left(G\right)\right]$.
Setting $\phi\left(1-\pi_{1}^{K}\right)<1-\pi_{1}^{K-c-1}\pi_{2}$
implies that $\phi\le\frac{1-\pi_{1}^{K-c-1}\pi_{2}}{1-\pi_{1}^{K}}$.
Setting $\left(1-\phi\right)\left(1-\pi_{2}^{K}\right)<1-\pi_{1}^{K-c-1}\pi_{2}$
implies that $1-\frac{1-\pi^{K-c}}{1-\pi^{K}}\le\phi$. We can verify
that in the appropriate ranges, individuals choose to specialize.
The result follows immediately. The proof for $\delta_{2}>\delta_{1}$
is similar. For the proof when $\delta_{1}=\delta_{2}$, see Theorem
\ref{thm:Communication Barriers and Generalists}. 
\end{proof}
Finally, Theorem \ref{thm: Optimality of the Equilibrium-1} is the
generalization of Theorem \ref{thm: Optimality of the Equilibrium}.
It states that there is a parameter region in which individuals choose
to specialize, but society would prefer to have at least a few generalists. 
\begin{thm}
\label{thm: Optimality of the Equilibrium-1}If skills are independent
and symmetric within discipline, and there are barriers to working
on problems in other disciplines, then there is a range of values
for $\phi$ (the fraction of problems assigned to discipline 1) such
that generalists are underprovided in the equilibrium population of
problem solvers.

In particular, generalists are underprovided in the following ranges: 

If $\delta_{1}=\delta_{2}$, then generalists are underprovided when
$\frac{1-\pi^{K-c}}{1-\pi^{K}}<\phi<\frac{1-\pi^{N\left(K-c\right)}}{1-\pi^{NK}}$
or $1-\frac{1-\pi^{N\left(K-c\right)}}{1-\pi^{NK}}<\phi<1-\frac{1-\pi^{K-c}}{1-\pi^{K}}$ 

If $\delta_{1}>\delta_{2}$ , then generalists are underprovided when
$\frac{1-\pi_{1}^{K-c-1}\pi_{2}}{1-\pi_{1}^{K}}<\phi<\frac{1-\pi_{1}^{N\left(K-c-1\right)}\pi_{2}}{1-\pi_{1}^{NK}}$
or $1-\frac{1-\pi_{1}^{N\left(K-c-1\right)}\pi_{2}}{1-\pi_{2}^{NK}}<\phi<1-\frac{1-\pi_{1}^{K-c-1}\pi_{2}}{1-\pi_{2}^{K}}$ 

If $\delta_{2}>\delta_{1}$ , then generalists are underprovided when
$\frac{1-\pi_{1}\pi_{2}^{K-c-1}}{1-\pi_{1}^{K}}<\phi<\frac{1-\pi_{1}\pi_{2}^{N\left(K-c-1\right)}}{1-\pi_{1}^{NK}}$
or $1-\frac{1-\pi_{1}\pi_{2}^{N\left(K-c-1\right)}}{1-\pi_{2}^{NK}}<\phi<1-\frac{1-\pi_{1}\pi_{2}^{K-c-1}}{1-\pi_{2}^{K}}$ \end{thm}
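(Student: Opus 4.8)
The plan is to follow the template of the symmetric case (Theorem~\ref{thm: Optimality of the Equilibrium}), now tracking the two distinct per-skill failure probabilities $\pi_1$ and $\pi_2$. By relabeling disciplines it suffices to treat $\delta_1>\delta_2$ (so $\pi_1<\pi_2$); the case $\delta_2>\delta_1$ is symmetric and $\delta_1=\delta_2$ is exactly Theorem~\ref{thm: Optimality of the Equilibrium}. First I would write down the social welfare---the probability that at least one of the $N$ workers solves the drawn problem, computed as $1$ minus the probability that all fail---for each of the three candidate homogeneous populations. For a population of discipline-$1$ specialists the problem is accessible only with probability $\phi$, and conditional on access each worker fails independently with probability $\pi_1^{K}$, giving welfare $\phi\bigl(1-\pi_1^{NK}\bigr)$; symmetrically a discipline-$2$ population yields $(1-\phi)\bigl(1-\pi_2^{NK}\bigr)$. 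A population of generalists (who by Theorem~\ref{thm:Communication Barriers and Generalists-1} place $K-c-1$ skills in discipline~$1$ and one in discipline~$2$) can access every problem and each fails with probability $\pi_1^{K-c-1}\pi_2$, giving welfare $1-\bigl(\pi_1^{K-c-1}\pi_2\bigr)^{N}$.

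Next I would read off the two social thresholds by comparing the generalist welfare to each specialist welfare: society strictly prefers a generalist population to discipline-$1$ specialists exactly when $\phi<\frac{1-\pi_1^{N(K-c-1)}\pi_2^{N}}{1-\pi_1^{NK}}$, and to discipline-$2$ specialists when $\phi>1-\frac{1-\pi_1^{N(K-c-1)}\pi_2^{N}}{1-\pi_2^{NK}}$. Against these I would place the individual equilibrium thresholds already established in Theorem~\ref{thm:Communication Barriers and Generalists-1}: individuals abandon generalization in favor of discipline-$1$ specialization once $\phi>\frac{1-\pi_1^{K-c-1}\pi_2}{1-\pi_1^{K}}$, and in favor of discipline-$2$ specialization once $\phi<1-\frac{1-\pi_1^{K-c-1}\pi_2}{1-\pi_2^{K}}$. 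The underprovision region at each end is then the interval of $\phi$ lying between the (more demanding) individual specialization threshold and the (more permissive) social threshold, which delivers the two stated ranges.

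The crux---and the only real obstacle---is verifying that at each end the individual threshold lies weakly inside the social threshold, so that these intervals are correctly oriented rather than empty. In every case this reduces to an inequality of the form $\frac{1-r}{1-s}\le\frac{1-r^{N}}{1-s^{N}}$ for suitable $r,s\in(0,1)$. The clean way to dispatch this is to rewrite it as $\frac{1-r^{N}}{1-r}\ge\frac{1-s^{N}}{1-s}$ and observe that both sides equal $g(x)=1+x+\cdots+x^{N-1}$ evaluated at $r$ and $s$; since $g$ is strictly increasing on $(0,1)$, the inequality holds precisely when $r\ge s$. So the whole argument hinges on the base orderings of $r$ and $s$.

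Finally I would check those base orderings. At the discipline-$1$ end, $r=\pi_1^{K-c-1}\pi_2$ and $s=\pi_1^{K}$, and $r\ge s\iff\pi_2\ge\pi_1^{\,c+1}$, which always holds because $\pi_1^{\,c+1}\le\pi_1<\pi_2$; hence this range is always valid and non-degenerate. At the discipline-$2$ end, $r=\pi_1^{K-c-1}\pi_2$ and $s=\pi_2^{K}$, and here $r>s$ is not automatic---but I would note that $r>s$ is exactly the condition $1-\frac{1-r}{1-s}>0$ under which a discipline-$2$ specialization region exists at all in Theorem~\ref{thm:Communication Barriers and Generalists-1}. Thus whenever the second range is non-empty the monotonicity argument applies verbatim, and when it fails there are simply no discipline-$2$ specialists to misallocate, so the claim still holds through the discipline-$1$ range. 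The $\delta_2>\delta_1$ case is obtained by interchanging the roles of the two disciplines.
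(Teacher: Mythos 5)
Your proposal follows the same route as the paper's own proof: compute the social success probability for each of the three homogeneous populations ($\phi\left(1-\pi_{1}^{NK}\right)$, $\left(1-\phi\right)\left(1-\pi_{2}^{NK}\right)$, and $1-\pi_{1}^{N\left(K-c-1\right)}\pi_{2}^{N}$), read off the social thresholds, and sandwich the individual thresholds from Theorem \ref{thm:Communication Barriers and Generalists-1} inside them. Where you go beyond the paper is at the ordering step: the paper merely asserts that {}``it is always the case that'' the individual threshold lies inside the social one, whereas you prove it by rewriting the comparison as $\frac{1-r^{N}}{1-r}\ge\frac{1-s^{N}}{1-s}$ and observing that both sides are the geometric sum $1+x+\cdots+x^{N-1}$, which is increasing, so the ordering holds exactly when $r\ge s$. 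You also correctly flag that at the discipline-2 end the base ordering $\pi_{1}^{K-c-1}\pi_{2}\ge\pi_{2}^{K}$ is not automatic---so the paper's {}``always'' is not literally true there---but that when it fails the discipline-2 specialization region, and hence the corresponding underprovision interval, is empty, leaving the theorem intact. This is a genuine tightening of the paper's argument at its only nontrivial step.
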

\begin{proof}
First, suppose that $\delta_{1}>\delta_{2}$. The probability that
at least one of the $N$ problem-solvers in the population solves
the problem is $1-Prob\left(\mbox{none of them do}\right)$. If all
of the individuals in the population are specialists in discipline
1, then every individual has probability $\phi$ of a problem occurring
in her discipline. In that case, each specialist in discipline has
a probability $1-\pi_{1}^{K}$ of solving the problem and $\pi_{1}^{K}$
of not solving it. With probability $1-\phi$, the problem is assigned
to the other discipline, and no specialist solves it. Thus, the probability
of someone in a population of specialists solving the problem is 
\begin{eqnarray*}
Prob\left(\mbox{one of N solve it}\right) & = & 1-Prob\left(\mbox{none of N solve it}\right)\\
 & = & 1-\left[\phi Prob\left(\mbox{none solve problem in }d_{1}\right)+\left(1-\phi\right)Prob\left(\mbox{none solve problem in }d_{2}\right)\right]\\
 & = & 1-\left[\phi Prob\left(\mbox{one fails}\right)^{N}+\left(1-\phi\right)*1\right]\\
 & = & 1-\left[\phi\left(\pi_{1}^{K}\right)^{N}+\left(1-\phi\right)*1\right]\\
 & = & \phi\left(1-\pi_{1}^{KN}\right)
\end{eqnarray*}

Through a similar argument, if everyone in the population is a specialist
in discipline 2, then the probability that someone in the population
solves the problem is $\left(1-\phi\right)\left(1-\pi_{2}^{KN}\right)$.

If everyone in the population is a generalists, then the probability
of at least one person in solving the problem is 
\begin{eqnarray*}
Prob\left(\mbox{one of N solve it}\right) & = & 1-Prob\left(\mbox{none of N solve it}\right)\\
 & = & 1-\left(\pi_{1}^{K-c-1}\pi_{2}\right)^{N}\\
 & = & 1-\pi_{1}^{N\left(K-c-1\right)}\pi_{2}^{N}
\end{eqnarray*}

Society is better off with a population of generalists than a population
of discipline 1 specialists when $1-\pi_{1}^{N\left(K-c-1\right)}\pi_{2}^{N}>\phi\left(1-\pi_{1}^{KN}\right)$,
which is true when $\phi<\frac{1-\pi_{1}^{N\left(K-c-1\right)}\pi_{2}^{N}}{1-\pi_{1}^{NK}}$.
However, there is a population of generalists when $\phi\le\frac{1-\pi_{1}^{K-c-1}\pi_{2}}{1-\pi_{1}^{K}}$.
It is always the case that $\frac{1-\pi_{1}^{K-c-1}\pi_{2}}{1-\pi_{1}^{K}}\le\frac{1-\pi_{1}^{N\left(K-c-1\right)}\pi_{2}}{1-\pi_{1}^{NK}}$.
Thus, if $\frac{1-\pi_{1}^{K-c-1}\pi_{2}}{1-\pi_{1}^{K}}<\phi<\frac{1-\pi_{1}^{N\left(K-c-1\right)}\pi_{2}}{1-\pi_{1}^{NK}}$,
then society is better off with a population of generalists, but has
a population of specialists.

Through a similar argument, society is better off with a population
of generalists than a population of discipline 2 specialists when
$1-\pi_{1}^{N\left(K-c-1\right)}\pi_{2}^{N}>\left(1-\phi\right)\left(1-\pi_{2}^{KN}\right)$,
which is true when $\phi>1-\frac{1-\pi_{1}^{N\left(K-c-1\right)}\pi_{2}^{N}}{1-\pi_{2}^{NK}}$.
However, there is a population of generalists when $\phi\le1-\frac{1-\pi_{1}^{K-c-1}\pi_{2}}{1-\pi_{2}^{K}}$
. It is always the case that $1-\frac{1-\pi_{1}^{N\left(K-c-1\right)}\pi_{2}}{1-\pi_{2}^{NK}}\le1-\frac{1-\pi_{1}^{K-c-1}\pi_{2}}{1-\pi_{2}^{K}}$.
Thus, if $1-\frac{1-\pi_{1}^{N\left(K-c-1\right)}\pi_{2}}{1-\pi_{2}^{NK}}<\phi<1-\frac{1-\pi_{1}^{K-c-1}\pi_{2}}{1-\pi_{2}^{K}}$,
then society is better off with a population of generalists, but has
a population of specialists.

The proof for $\delta_{2}>\delta_{2}$ is similar. See the proof of
Theorem \ref{thm: Optimality of the Equilibrium} for the case where
$\delta_{1}=\delta_{2}$. \end{proof}

\end{document}